\documentclass[conference]{IEEEtran}
\IEEEoverridecommandlockouts

\usepackage{cite}
\usepackage{amsmath,amssymb,amsfonts,amsthm}
\usepackage{algorithmic}
\usepackage{graphicx}
\graphicspath{{figs/}}
\usepackage{textcomp}
\usepackage{xcolor}
\usepackage{subcaption}
\usepackage{float}

\newtheorem{theorem}{Theorem}
\newtheorem{remark}{Remark}

\usepackage{amsmath}
\usepackage{booktabs}
\DeclareMathOperator{\sech}{sech}

\def\BibTeX{{\rm B\kern-.05em{\sc i\kern-.025em b}\kern-.08em
		T\kern-.1667em\lower.7ex\hbox{E}\kern-.125emX}}
\begin{document}
	
	\title{Direct Vector Field Guidance for  Path Following}
	
	\author{
		\IEEEauthorblockN{
			\begin{tabular}{cc}
				\begin{minipage}{0.45\textwidth}
					\centering
					Amit Shivam\\
					\textit{SYSTEC, ISR--Porto, ARISE-LA}\\
					\textit{Universidade do Porto}\\
					Porto, Portugal\\
					amitshivam@upwind.pt
				\end{minipage}
				&
				\begin{minipage}{0.45\textwidth}
					\centering
					Nikhil Kumar Singh\\
					Department of Aviation Science\\
					Rajiv Gandhi National Aviation University\\
					Amethi, Uttar Pradesh, India \\
					nikhil.singh@rgnau.ac.in
				\end{minipage}
				\\[3.0em]
				\begin{minipage}{0.45\textwidth}
					\centering
					Luís~Tiago~Paiva\\
					\textit{SYSTEC, ISR--Porto, ARISE-LA}\\
					\textit{Faculdade de Engenharia}\\
					\textit{Universidade do Porto}\\
					Porto, Portugal\\
					ltpaiva@fe.up.pt
				\end{minipage}
				&
				\begin{minipage}{0.45\textwidth}
					\centering
					Fernando~A.C.C.~Fontes\\
					\textit{SYSTEC, ISR--Porto, ARISE-LA}\\
					\textit{Faculdade de Engenharia}\\
					\textit{Universidade do Porto}\\
					Porto, Portugal\\
					faf@fe.up.pt
				\end{minipage}
			\end{tabular}
		}
	}
	
	
	\maketitle
	\begingroup
	\renewcommand\thefootnote{}
	\footnotetext{This work is financially supported by Project 2022.02801.PTDC-UPWIND-ATOL (https://doi.org/10.54499/2022.02801.PTDC), as well as by grants 2021.06313.BD and 2021.07346.BD, and by the SYSTEC – Research Centre for Systems and Technologies (UID/00147) and the Associate Laboratory ARISE – Advanced Production and Intelligent Systems (LA/P/0112/2020, DOI: 10.54499/LA/P/0112/2020), all funded by Fundação para a Ciência e a Tecnologia, I.P./MECI through national funds.}
	\endgroup

	\begin{abstract}
		This paper presents a path-following guidance framework for a turn-rate-constrained unmanned aerial vehicle (UAVs). The proposed guidance strategy uses commanded course angle as a tanh-hyperbolic shaping function of the lateral deviation from the desired path.
		Stability characteristics illustrating asymptotic convergence to the desired path are established using Lyapunov analysis.  Analytical expressions for the turn-rate and corresponding maximum turn-rate bound are derived, enabling direct gain selection for a turn-rate-limited vehicle. Numerical simulations are performed for straight-line following scenarios under both ideal and non-ideal initial heading conditions. Furthermore, a comparative study against representative arctan-, arcsine-, and arccos-based vector-field guidance laws is conducted using analytically feasible gains selected from their respective turn-rate bounds. Comparative numerical results demonstrate  the proposed formulation achieves reliable boundary convergence while maintaining smooth steering behavior and a compact analytical gain-design framework. 
	\end{abstract}
	
	\begin{IEEEkeywords}
		Vector field guidance, Unmanned aerial vehicles, path following, Turn-rate constraints, Lyapunov stability analysis
	\end{IEEEkeywords}
	
		
		
	
	\section{Introduction}
	
	Autonomous path-following capability has become an essential requirement for UAVs operating in geo-fencing, perimeter surveillance, infrastructure inspection, and persistent monitoring missions.
	In such applications, the UAV is required to converge toward and follow a prescribed path while maintaining smooth maneuvering and satisfying the vehicle's dynamic constraints.
	For fixed-wing UAVs in particular, the achievable maneuverability is constrained by the maximum allowable turn-rate and minimum turning radius.
	Consequently, guidance laws that demand excessive transient steering commands may produce infeasible trajectories and degrade tracking performance.
	This work develops analytically tractable, turn-rate-aware guidance laws for path-following.
	
	Among the various path-following approaches available in the literature\cite{sujit2014unmanned}, vector-field guidance methods\cite{frew2008coordinated,goncalves2009artificial,pothen2017curvature,kapitanyuk2017,nelson2007vector,amitRatnoo2021,shivamRatnoo2023,amit2022vector,shivamLC2024,shivam2025robustvf,shivam2024VFclass,pal2026curvature} have emerged as attractive solutions due to their smooth convergence characteristics, low computational complexity, and ease of implementation.
	Existing vector-field guidance approaches can broadly be categorized into gradient-based vector fields\cite{frew2008coordinated,goncalves2009artificial,pothen2017curvature,kapitanyuk2017} and direct vector fields\cite{nelson2007vector, amitRatnoo2021,shivamRatnoo2023,amit2022vector,shivamLC2024,pal2026curvature}.
	Gradient-based vector fields generally construct the heading guidance command using gradients of potential functions or implicit path representations.
	Although these methods provide strong geometric interpretation and convergence guarantees, the computation of gradient and curvature computations can be computationally demanding, complicating real-time implementation and maneuvering-feasibility analysis.
	Direct vector-field guidance laws offer an alternative computationally inexpensive formulation in which the commanded course angle is directly generated as a nonlinear function of the UAV position relative to the desired path.
	In~\cite{nelson2007vector} is proposed one of the earliest direct vector-field guidance laws using an arctangent shaping profile for path following.
	Subsequently, arcsine-based vector fields were developed in \cite{shivamRatnoo2023,amit2022vector,shivamLC2024,shivam2025robustvf,shivam2024VFclass}, yielding reduced peak curvature demands compared to \cite{nelson2007vector}.
	More recently, exponential-arccosine shaping profiles \cite{pal2026curvature} have been proposed to improve convergence smoothness and reduce peak curvature through exponentially modulated heading commands.
	These direct vector-field guidance laws provide smooth steering behavior while avoiding the complexity associated with gradient-based guidance frameworks.
	
	Despite these works, the maneuvering-feasibility characteristics of direct vector fields remain largely unexplored.
	In particular, aggressive convergence profiles often produce large transient turn-rate demands near the path, potentially exceeding the allowable steering capability of turn-constrained UAVs.
	Motivated by these observations, this paper presents a turn-rate-limited analytical framework for direct vector-field guidance laws that preserves smooth saturation and strong convergence properties while enabling explicit analytical characterisation of turn-rate and direct gain selection from prescribed vehicle manoeuvring limits. 
	
	The main contributions of this paper are summarized as follows:
	\begin{enumerate}
		\item development of a  direct vector-field guidance law for turn-limited UAV path following,
		\item analytical derivation of turn-rate profiles and corresponding maximum turn-rate bounds for the proposed method with explicit gain-selection conditions directly linking the shaping gain to prescribed vehicle turn-rate limitations, 
		\item comparative analytical investigation of arctangent-, arcsine-, and arccosine-based vector fields to characterise their turn-rate and convergence properties,
		\item simulation studies under ideal and non-ideal heading conditions to validate maneuvering feasibility and tracking performance.
	\end{enumerate}
	
	The remainder of the paper is organized as follows. Section II presents the problem formulation for turn-rate-limited UAV path following. Section III develops the proposed guidance law. Comparative analytical investigations with existing vector-field guidance laws are presented in Section IV.
	Section V provides simulation results, and Section VI presents concluding remarks.

	\section{Problem Statement}
	
	Consider a path-following scenario shown in Fig.~\ref{fig: path_following}, where a fixed-wing UAV is required to follow a prescribed path while satisfying the turn-rate limitations of the vehicle. The desired path is initially considered as a straight-line path aligned with the $Y$-axis. The UAV position in the inertial frame is denoted by $(x,y)$, where $x$ represents the lateral deviation from the desired path. The UAV ground speed and course angle are represented by $V_g$ and $\chi$, respectively. The maximum allowable turn-rate of the UAV is denoted by $\omega_{\max}$.
	\begin{figure}[h]
		\centering
		\includegraphics[width=0.9\linewidth]{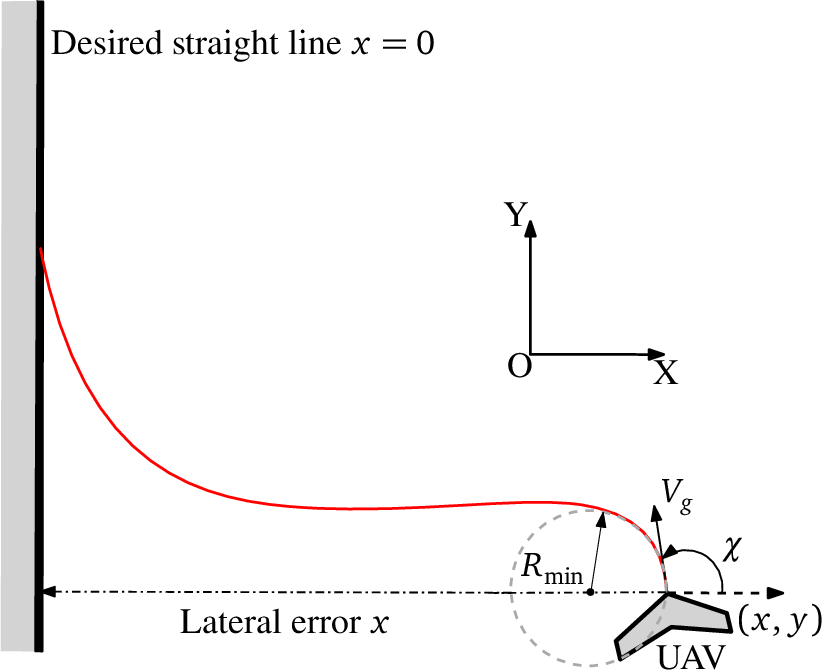}
		\caption{Path-following scenario.}
		\label{fig: path_following}
	\end{figure}
	The planar kinematic model of the UAV is expressed as
	\begin{align}
		\dot{x} = V_g \cos \chi\,, \qquad
		\dot{y} = V_g \sin \chi
		\label{eq: kinematic model}
	\end{align}
	The first-order course dynamics of the UAV are modeled as
	\begin{equation*}
		\dot{\chi} = \alpha(\chi_d-\chi),
	\end{equation*}
	where $\chi_d$ is the commanded course angle generated by the vector-field guidance law and $\alpha>0$ is the controller gain. The objective is to design a direct vector-field guidance law that drives the UAV toward the desired path while ensuring smooth convergence and maintaining turn-feasible motion throughout the transient maneuver.
	Mathematically, the guidance objective is to design $\chi_d$ such that
	\begin{align*}
		x(t) \rightarrow 0, \qquad 
		\chi(t) \rightarrow \frac{\pi}{2},
	\end{align*}
	while satisfying the turn-rate feasibility condition
	\begin{equation*}
		|\dot{\chi}_d| \leq \omega_{\max},
	\end{equation*}
	where $\dot{\chi}_d$ denotes the commanded turn-rate of the vector field guidance. The emphasis of this work is therefore on developing a feasible direct vector-field guidance law whose shaping parameter can be analytically related to the maximum allowable turn-rate limitation of the UAV.
	
	\section{Proposed Vector Field Guidance Method}
	
	This section develops a direct vector-field guidance law for path following. 
	
	\subsection{Guidance Law}
	
	The proposed course-angle command is defined as
	\begin{equation}
		\chi_d(x) = \frac{\pi}{2} + \sin^{-1}\left(\tanh(k_\mathrm{atanh}x)\right)
		\label{eq:proposed_chi}
	\end{equation}
	where $k_\mathrm{atanh}>0$ is the vector-field shaping gain. The term $\pi/2$ aligns the desired motion with the positive $Y$-axis when the UAV lies on the path and can be rotated for the inclined path.
	From \eqref{eq:proposed_chi}, when $x=0$, $\chi_d(0)=\frac{\pi}{2}$, which corresponds to motion along the desired path. For $x>0$, $\tanh(k_\mathrm{atanh}x)>0$, and hence $\chi_d>\pi/2$. Since the course angle is measured from the positive $X$-axis in the kinematic model \eqref{eq: kinematic model}
	the condition $\chi_d>\pi/2$ implies $\cos(\chi_d)<0$, which drives the UAV toward decreasing $x$. Similarly, for $x<0$, $\tanh(k_\mathrm{atanh}x)<0$, and hence $\chi_d<\pi/2$, which implies $\cos(\chi_d)>0$ and drives the UAV toward increasing $x$. Therefore, the vector field always points toward the desired path $x=0$.
	Using $\cos\left({\pi}/{2}+\theta\right)=-\sin(\theta)$ and setting $\theta=\sin^{-1}(\tanh(k_\mathrm{atanh}x))$, the lateral error dynamics under ideal course following become
	\begin{equation}
		\dot{x}
		=
		V_g\cos(\chi_d)
		=
		-V_g\tanh(k_\mathrm{atanh}x).
		\label{eq:lateral_dynamics}
	\end{equation}
	Equation~\eqref{eq:lateral_dynamics} shows that the lateral motion is directed toward the path for all $x\neq 0$. Fig.~\ref{fig: Results for course angle and the corresponding vector field} illustrates the resulting course-angle profile and vector field plots. The continuous heading transition across the entire lateral-error domain is shown in Fig. \ref{fig: Commanded course angle} and the corresponding vector fields are plotted in Fig. \ref{fig: Vector field}.
	
	\begin{figure}
		\begin{subfigure}[b]{.25\textwidth}
			\centering			\includegraphics[trim={2pt 3pt 2pt 3pt}, clip, width=\linewidth]{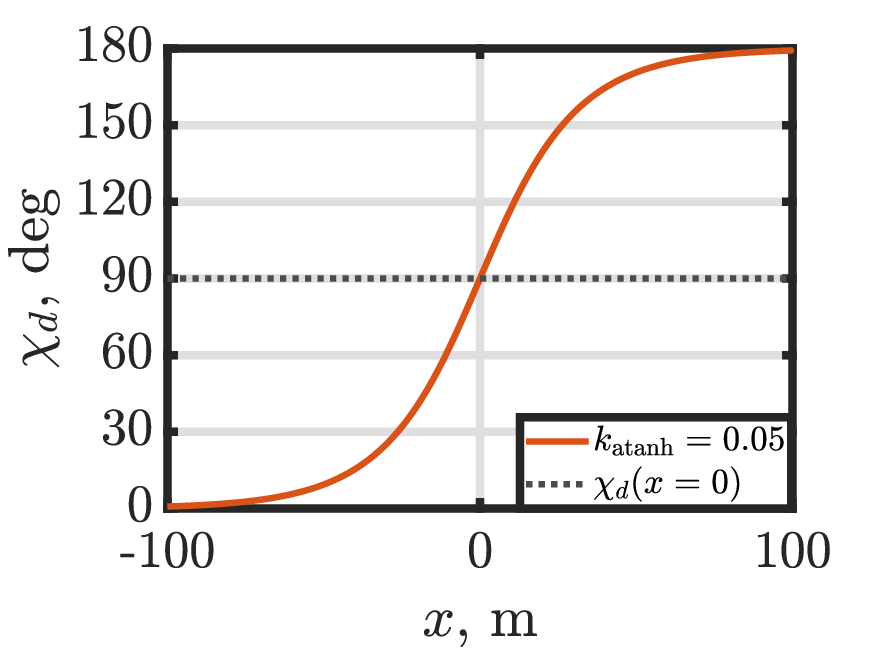 }
			\caption{ Commanded course angle}
			\label{fig: Commanded course angle}     
		\end{subfigure}%
		\begin{subfigure}[b]{.25\textwidth}
			\centering		\includegraphics[trim={2pt 3pt 2pt 3pt}, clip, width=\linewidth]{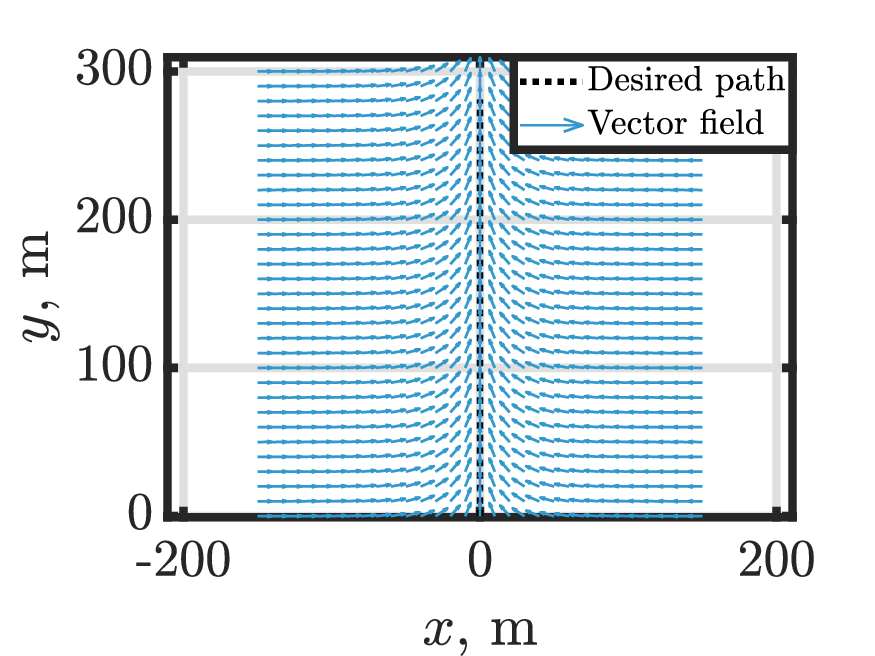}
			\caption{ Vector field}
			\label{fig: Vector field}
		\end{subfigure}%
		\caption{Results for desired guidance command and vector fields}
		\label{fig: Results for course angle and the corresponding vector field}
	\end{figure}
	
	\subsection{Turn-Rate Characteristics}
	
	The commanded turn-rate is computed as
	\begin{equation}
		\dot{\chi}_d
		=
		\frac{d\chi_d}{dx}\dot{x}.
		\label{eq: desired kaidot}
	\end{equation}
	Differentiating \eqref{eq:proposed_chi} with respect to $x$ gives
	\begin{align*}
		\frac{d\chi_d}{dx}
		&=    \frac{k_\mathrm{atanh}\sech^2(k_\mathrm{atanh}x)}
		{\sqrt{1-\tanh^2(k_\mathrm{atanh}x)}}.
	\end{align*}
	Using
	\begin{equation*}
		1-\tanh^2(k_\mathrm{atanh}x)=\sech^2(k_\mathrm{atanh}x),
	\end{equation*}
	we obtain
	\begin{equation}
		\frac{d\chi_d}{dx}
		=
		k_\mathrm{atanh}\sech(k_\mathrm{atanh}x).
		\label{eq:dchidx}
	\end{equation}
	Substituting \eqref{eq:lateral_dynamics} and \eqref{eq:dchidx} in \eqref{eq: desired kaidot} yields
	\begin{equation}
		\dot{\chi}_d
		=
		-k_\mathrm{atanh}V_g\sech(k_\mathrm{atanh}x)\tanh(k_\mathrm{atanh}x).
		\label{eq:turn_rate_expression}
	\end{equation}
	Therefore,
	\begin{equation}
		|\dot{\chi}_d|
		=
		k_\mathrm{atanh}V_g \left|\sech(k_\mathrm{atanh}x)\tanh(k_\mathrm{atanh}x) \right|.
		\label{eq:turn_rate_abs}
	\end{equation}
	Let $z=\tanh(k_\mathrm{atanh}x).$
	Since $z\in(-1,1)$ and $\sech(k_\mathrm{atanh}x)=\sqrt{1-z^2}$, \eqref{eq:turn_rate_abs} becomes
	\begin{equation*}
		|\dot{\chi}_d|
		=
		k_\mathrm{atanh}V_g |z|\sqrt{1-z^2}.
	\end{equation*}
	The maximum value of $|z|\sqrt{1-z^2}$ is obtained by maximizing
	\begin{equation}
		g(z)=z\sqrt{1-z^2}, \qquad z\in[0,1].
		\label{eq: gfun}
	\end{equation}
	Equating the derivative of $g^2(z)$ to zero gives
	\begin{equation*}
		\frac{d}{dz}g^2(z)
		=
		2z(1-2z^2)=0 \Rightarrow z=0
		\quad \text{or} \quad
		z^2=\frac{1}{2}.
	\end{equation*}
	Using $z = \frac{1}{2} $ in \eqref{eq: gfun} gives the maximum value as $g_{\max}=
	\frac{1}{2}$.
	Hence, the maximum commanded turn-rate is
	\begin{equation}
		|\dot{\chi}_d|_{\max}
		=
		\frac{k_\mathrm{atanh}V_g}{2}.
		\label{eq:max_turn_rate}
	\end{equation}
	To satisfy the UAV turn-rate limitation
	\begin{equation*}
		|\dot{\chi}_d| \leq \omega_{\max},
	\end{equation*}
	the shaping gain must satisfy
	\begin{equation}
		k_\mathrm{atanh} \leq \frac{2\omega_{\max}}{V_g}.
		\label{eq:gain_condition}
	\end{equation}
	Equation~\eqref{eq:gain_condition} provides an explicit analytical relation between the vector-field shaping gain and the vehicle turn-rate limit. Fig.~\ref{fig: Commanded turn-rate profiles with lateral error} shows the commanded turn-rate profiles for different shaping gains, highlighting the direct influence of $k_{\mathrm{atanh}}$ on the aggressive behaviour and peak turn-rate magnitude.
	
	\begin{figure}
		\centering
		\includegraphics[width=0.9\linewidth]{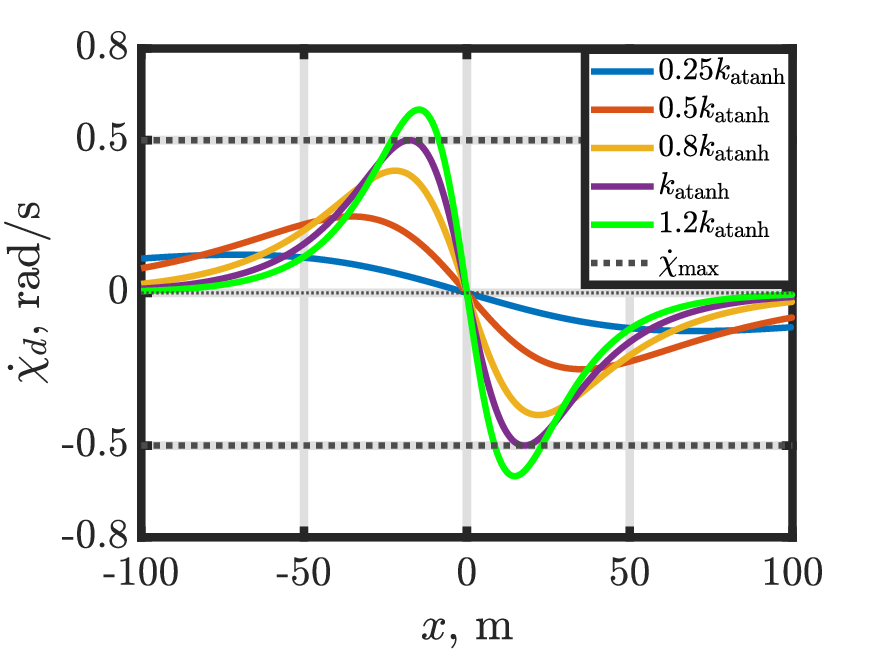}
		\caption{Commanded turn-rate profiles with lateral error}
		\label{fig: Commanded turn-rate profiles with lateral error}
	\end{figure}
	
	\subsection{Stability Properties}
	
	The asymptotic convergence property of the proposed guidance method is discussed in the subsequent theorem.
	
	\begin{theorem}
		Consider the UAV lateral error dynamics governed by $\dot{x} = V_g \cos (\chi)$ under the proposed direct vector-field guidance law \eqref{eq:proposed_chi}. For any $k_\mathrm{atanh}>0$ and constant ground speed $V_g>0$, the desired path $x=0$ is globally asymptotically stable under ideal course $\chi = \chi_d$ following. Furthermore, if the shaping gain satisfies \eqref{eq:gain_condition}
		then the commanded turn-rate satisfies $|\dot{\chi}_d| \leq \omega_{\max}$
		for all $x\in\mathbb{R}$.
	\end{theorem}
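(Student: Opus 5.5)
Under ideal course following $\chi=\chi_d$, the closed-loop lateral dynamics reduce to the scalar autonomous system \eqref{eq:lateral_dynamics}, $\dot{x}=-V_g\tanh(k_\mathrm{atanh}x)$. The plan is to establish global asymptotic stability of $x=0$ with the radially unbounded Lyapunov candidate
\begin{equation*}
V(x)=\tfrac{1}{2}x^2 .
\end{equation*}
This $V$ is positive definite, and $V(x)\to\infty$ as $|x|\to\infty$. Its time derivative along \eqref{eq:lateral_dynamics} is
\begin{equation*}
\dot V = x\dot x = -V_g\, x\tanh(k_\mathrm{atanh}x).
\end{equation*}
Since $\tanh$ is odd and strictly increasing with $\tanh(0)=0$, we have $x\tanh(k_\mathrm{atanh}x)>0$ for every $x\neq0$. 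Because $V_g>0$ and $k_\mathrm{atanh}>0$, it follows that $\dot V<0$ for all $x\neq 0$, with $\dot V=0$ only at the origin. The right-hand side $-V_g\tanh(k_\mathrm{atanh}x)$ is globally Lipschitz, with constant $V_g k_\mathrm{atanh}$. Hence solutions exist and are unique for all $t\ge0$. Barbashin--Krasovskii (the standard Lyapunov theorem with radial unboundedness) then gives global asymptotic stability of $x=0$. As a by-product, $\chi(t)=\chi_d(x(t))\to\chi_d(0)=\pi/2$ by continuity of \eqref{eq:proposed_chi}.

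For the turn-rate part, I will reuse the derivation preceding the theorem. Equation \eqref{eq:turn_rate_abs} gives $|\dot\chi_d|=k_\mathrm{atanh}V_g|z|\sqrt{1-z^2}$ with $z=\tanh(k_\mathrm{atanh}x)\in(-1,1)$. The one point to tidy up is the bound $|z|\sqrt{1-z^2}\le 1/2$. The text plugs in $z=1/2$, whereas the critical point is actually $z^2=1/2$, at which $g=\tfrac{1}{\sqrt2}\cdot\tfrac{1}{\sqrt2}=\tfrac12$. I would instead give a one-line AM--GM argument that avoids calculus and covers all $z$ at once:
\begin{equation*}
|z|\sqrt{1-z^2}\le \frac{z^2+(1-z^2)}{2}=\frac12 .
\end{equation*}
This yields $|\dot\chi_d|\le k_\mathrm{atanh}V_g/2$ for all $x\in\mathbb{R}$, with equality at $\tanh(k_\mathrm{atanh}x)=\pm1/\sqrt2$. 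Imposing \eqref{eq:gain_condition} then gives $|\dot\chi_d|\le \omega_{\max}$ directly.

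There is no substantial obstacle; the only delicate point is the interpretation of $\dot\chi_d$. It is computed along the ideal closed-loop dynamics $\dot x=V_g\cos\chi_d$, as in \eqref{eq: desired kaidot}, so the bound holds under the stated ideal-following assumption. I would note this explicitly in the proof. Under the general first-order course dynamics, $\dot x=V_g\cos\chi$, and the bound would become $k_\mathrm{atanh}V_g\,\sech(k_\mathrm{atanh}x)|\cos\chi|\le k_\mathrm{atanh}V_g$. That case is not claimed by the theorem.
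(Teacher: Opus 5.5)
Your proof is correct and follows the paper's route. It uses the Lyapunov function $\tfrac12 x^2$ with $\dot W=-V_g\,x\tanh(k_\mathrm{atanh}x)<0$ for $x\neq0$, then combines the peak bound $k_\mathrm{atanh}V_g/2$ with \eqref{eq:gain_condition}. Your extras only tighten details the paper leaves implicit: radial unboundedness, well-posedness, and the AM--GM bound $|z|\sqrt{1-z^2}\le\tfrac12$, which also repairs the paper's slip of substituting $z=\tfrac12$ instead of $z^2=\tfrac12$.
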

	
	\begin{proof}
		Using the proposed guidance law \eqref{eq:proposed_chi}, the lateral error dynamics are given by
		\begin{equation*}
			\dot{x}=-V_g\tanh(k_\mathrm{atanh}x).
		\end{equation*}
		Consider the Lyapunov function as
		\begin{equation*}
			W(x)=\frac{1}{2}x^2.
		\end{equation*}
		Differentiating along the system trajectories gives
		\begin{equation*}
			\dot{W}(x)=x\dot{x}.
		\end{equation*}
		Substituting the lateral dynamics yields
		\begin{equation*}
			\dot{W}(x)
			=
			-V_g x\tanh(k_\mathrm{atanh}x).
		\end{equation*}
		Since $k_\mathrm{atanh}>0$, the function $\tanh(k_\mathrm{atanh}x)$ has the same sign as $x$. Therefore,
		\begin{equation*}
			x\tanh(k_\mathrm{atanh}x)>0, \qquad \forall x\neq 0.
		\end{equation*}
		Thus,
		\begin{equation*}
			\dot{W}(x)<0, \qquad \forall x\neq 0,
		\end{equation*}
		and $\dot{W}(0)=0$. Hence, $x=0$ is globally asymptotically stable.
		Next, from \eqref{eq:max_turn_rate}, we have
		\begin{equation*}
			|\dot{\chi}_d|_{\max}
			=
			\frac{k_\mathrm{atanh}V_g}{2}.
		\end{equation*}
		Therefore, if
		\begin{equation*}
			k_\mathrm{atanh} \leq \frac{2\omega_{\max}}{V_g} \Rightarrow |\dot{\chi}_d|_{\max}
			\leq
			\omega_{\max}.
		\end{equation*}
		Consequently,
		\begin{equation*}
			|\dot{\chi}_d| \leq \omega_{\max},
			\qquad \forall x\in\mathbb{R}.
		\end{equation*}
		This proves both global asymptotic convergence to the path and turn-rate feasibility of the proposed vector field.
	\end{proof}
	
	\begin{remark}
		Increasing $k_\mathrm{atanh}$ produces faster convergence toward the path, but also increases the maximum commanded turn-rate. Therefore, \eqref{eq:gain_condition} gives the largest feasible shaping gain that preserves turn-rate feasibility.
	\end{remark}

	\subsection{Comparative Analytical Discussion}
	
	This section presents a comparative analytical discussion of widely used direct vector-field guidance laws for turn-limited path following. 
	
	\subsubsection{Arctan Vector Field}
	The arctangent-based formulation developed in \cite{nelson2007vector},
	\begin{equation*}
		\chi_d^{\mathrm{atan}}
		=
		\frac{\pi}{2}
		+
		\tan^{-1}(k_\mathrm{atan}x).
	\end{equation*}
	The corresponding lateral dynamics are
	\begin{equation*}
		\dot{x}
		=
		-V_g
		\frac{k_\mathrm{atan}x}{\sqrt{1+k_\mathrm{atan}^2x^2}}.
	\end{equation*}
	The  turn-rate profile is obtained as
	\begin{equation*}
		\dot{\chi}_d^{\mathrm{atan}}
		=
		-\frac{k_\mathrm{atan}^2V_gx}
		{(1+k_\mathrm{atan}^2x^2)^{3/2}}.
	\end{equation*}
	The maximum turn-rate occurs at
	\begin{equation*}
		x=\pm\frac{1}{\sqrt{2}k_\mathrm{atan}},
	\end{equation*}
	yielding
	\begin{equation*}
		\left|\dot{\chi}_d^{\mathrm{atan}} \right|_{\max}
		=
		\frac{2k_\mathrm{atan}V_g}{3\sqrt{3}}.
	\end{equation*}
	
	\subsubsection{Arcsine Vector Field}
	
	The arcsine vector-field guidance law developed in \cite{shivamRatnoo2023} is given by
	\begin{equation*}
		\chi_d^{\mathrm{asin}}
		=
		\frac{\pi}{2}
		+
		\sin^{-1}
		\left(
		\frac{1}{1+k_\mathrm{asin}x^2}
		-1
		\right).
	\end{equation*}
	
	The corresponding turn-rate profile is
	\begin{equation*}
		\dot{\chi}_d^{\mathrm{asin}}
		=
		-\frac{2k_\mathrm{asin}V_gx}
		{(1+k_\mathrm{asin}x^2)^2}.
	\end{equation*}
	
	The maximum turn-rate is analytically obtained as
	\begin{equation*}
		\left| \dot{\chi}_d^{\mathrm{asin}} \right|_{\max}
		=
		\frac{9V_g}{8}
		\sqrt{\frac{k_\mathrm{asin}}{3}}.
	\end{equation*}
	Compared to the arctangent vector field, the arcsine formulation produces lower peak turn-rate requirements for comparable convergence characteristics when starting with the same initial heading directions.

	
	\subsubsection{Arccos Vector Field}
	The Arccos vector-field guidance law is used in \cite{pal2026curvature}
	\begin{equation}
		\chi_d^{\mathrm{acos}} =
		\begin{cases}
			\cos^{-1}\left(1-e^{k_\mathrm{acos} x}\right), & x \leq 0,\\[2mm]
			\pi-\cos^{-1}\left(1-e^{-k_\mathrm{acos} x}\right), & x>0,
		\end{cases}
		\label{eq:acos_vf}
	\end{equation}
	where $k_\mathrm{acos}>0$ is the shaping gain. 
	Following similar lateral dynamics trajectories and solving, one can obtain  the maximum commanded turn-rate of the Arccos vector field is
	\begin{equation}     
		\left| \dot{\chi}_d^{\text{acos}} \right|_{\max}
		=
		V_g k_\mathrm{acos}
		\sqrt{\frac{5\sqrt{5}-11}{2}} \approx 0.3003 V_g k_\text{acos}.
		\label{eq:acos_max_turn_rate}
	\end{equation}
	The shaping gain must satisfy
	\begin{equation}   
		k_\mathrm{acos}
		\leq
		\frac{\omega_{\max}}
		{V_g}
		\sqrt{\frac{2}{5\sqrt{5}-11}} \lesssim
		\frac{\omega_{\max}}
		{0.3003 V_g}.
		\label{eq:acos_gain_condition}
	\end{equation}
	Table~\ref{tab: qualitative comparison} summarizes the qualitative comparison using parameter $\rho=\omega_{\max}/V_g$ among the considered direct vector-field guidance laws.
	\begin{table*}[!t]
		\centering
		\caption{Comparative turn-rate characteristics and feasible gain selection using $\rho=\omega_{\max}/V_g$.}
		\label{tab: qualitative comparison}
		\renewcommand{\arraystretch}{1.25}
		\begin{tabular}{llll}
			\toprule
			\multicolumn{1}{c}{\textbf{Vector Field}} &
			\multicolumn{1}{c}{\textbf{Maximum Turn-rate}} &
			\multicolumn{1}{c}{\textbf{Feasible Gain}} &
			\multicolumn{1}{c}{\textbf{Remark}} \\
			\midrule
			
			Arctan VF &
			$\displaystyle \left| \dot{\chi}_d \right|_{\max}=\frac{2k_\mathrm{atan}V_g}{3\sqrt{3}}$
			&
			$\displaystyle k_\mathrm{atan} \leq 2.598\,\rho$
			&
			Smooth but gradual variation.
			\\[4mm]
			
			Arcsine VF &
			$\displaystyle |\dot{\chi}_d|_{\max}=\frac{9V_g}{8}\sqrt{\frac{k_\mathrm{asin}}{3}}$
			&
			$\displaystyle k_\mathrm{asin} \leq 2.370\,\rho^2$
			&
			Strong local steering; gain has unit $\rm m^{-2}$.
			\\[4mm]
			
			Arccos VF &
			$\displaystyle |\dot{\chi}_d|_{\max}=k_\mathrm{acos}V_g\sqrt{\frac{5\sqrt{5}-11}{2}}$
			&
			$\displaystyle k_\mathrm{acos} \leq 3.330\,\rho$
			&
			Strong far-field attraction.
			\\[4mm]
			
			Proposed VF &
			$\displaystyle |\dot{\chi}_d|_{\max}=\frac{k_\mathrm{atanh}V_g}{2}$
			&
			$\displaystyle k_\mathrm{atanh} \leq 2.000\,\rho$
			&
			Compact and smooth shaping profile.
			\\[2mm]
			
			\bottomrule
		\end{tabular}
	\end{table*}
	
	\begin{remark}
		The ratio $\rho=\omega_{\max}/V_g$ represents the available maneuverability of the UAV.
		However, the numerical coefficients in Table~\ref{tab: qualitative comparison} should not be interpreted as a convergence speed, because each vector field has a different nonlinear shaping profile and, in the arcsine case, the gain has a different dimension. 
	\end{remark}

	The above comparison highlights that the proposed   vector field combines analytical tractability and smooth convergence behavior while providing an explicit turn-feasibility-based gain-selection mechanism suitable for turn-limited UAV path-following applications.
	
	\section{Simulation Results}
	
	This section discusses a straight path-following scenario. 
	\subsection{Sample scenarios with ideal initial heading}
	To highlight the role of the analytical gain-selection condition, two cases are considered: a turn-feasible gain satisfying the proposed bound and an aggressive gain that violates the bound.
	The simulations were performed with a ground speed of $V_g=20~\mathrm{m\,s^{-1}}$, a prescribed turn-rate limit of $\omega_{\max}=0.5~\mathrm{rad\,s^{-1}}$, and an initial lateral error of $x_0=160~\mathrm{m}$. Two gain values were considered: a turn-rate-feasible gain $k_\mathrm{atanh}=0.0400$, selected according to the analytical condition $k_\mathrm{atanh}\leq 2\omega_{\max}/V_g$, and an aggressive gain $k_\mathrm{atanh}=0.1250$, chosen beyond the feasible limit. Fig.~\ref{fig: Results for following straight line path} illustrates the effect of guidance-gain selection on the path-following performance. As shown in Fig.~\ref{fig: UAV trajectories ideal}-\ref{fig: Error profiles ideal}, both gains successfully drive the UAV toward the desired path and exhibit similar convergence properties. However, the larger gain produces a steeper course-angle variation, as observed in Fig.~\ref{fig: Commanded course angle variations}-\ref{fig: Commanded course angle variations with lateral error}, resulting in a more aggressive steering response. The corresponding turn-rate profiles shown in Fig.~\ref{fig: Turn rate variations}-\ref{fig: Turn-rate variations with lateral error} depict that the feasible gain remains entirely within the prescribed turn-rate envelope, with an analytical peak turn-rate of $0.40~\mathrm{rad\,s^{-1}}$, whereas the aggressive gain generates a peak turn-rate of $1.25~\mathrm{rad\,s^{-1}}$, exceeding the allowable limit. These results validate the derived gain-selection condition and demonstrate that the proposed analytical design framework can guarantee maneuvering-feasible path following without significantly compromising convergence performance.
	\begin{figure}
		\begin{subfigure}[b]{.25\textwidth}
			\centering			\includegraphics[trim={2pt 3pt 2pt 3pt}, clip, width=\linewidth]{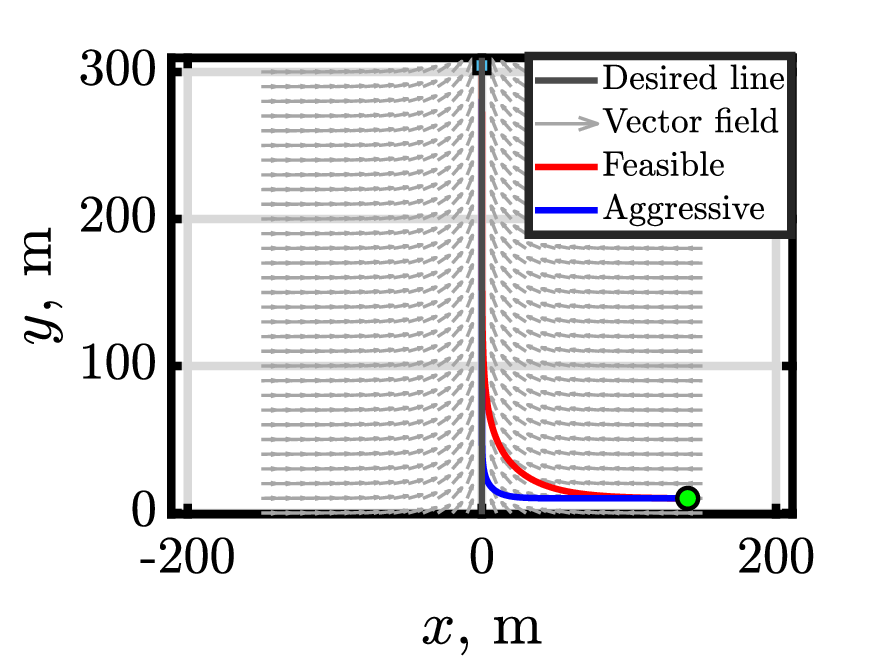}
			\caption{ UAV trajectory}
			\label{fig: UAV trajectories ideal}     
		\end{subfigure}%
		\begin{subfigure}[b]{.25\textwidth}
			\centering		\includegraphics[trim={2pt 3pt 2pt 3pt}, clip, width=\linewidth]{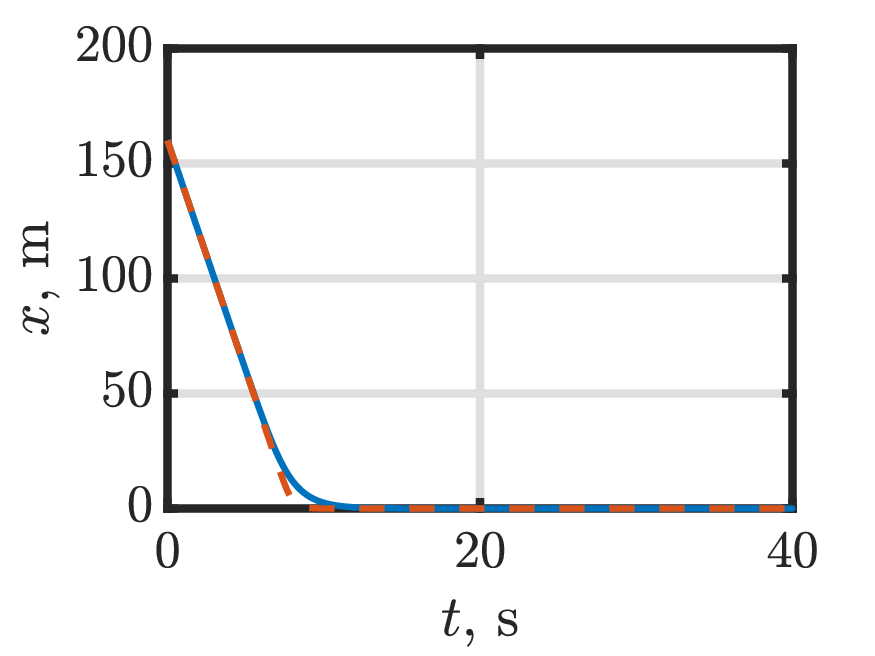}
			\caption{ Tracking error profiles}
			\label{fig: Error profiles ideal}
		\end{subfigure}%
		\qquad
		\begin{subfigure}[b]{.25\textwidth}
			\centering			\includegraphics[trim={2pt 3pt 2pt 3pt}, clip, width=\linewidth]{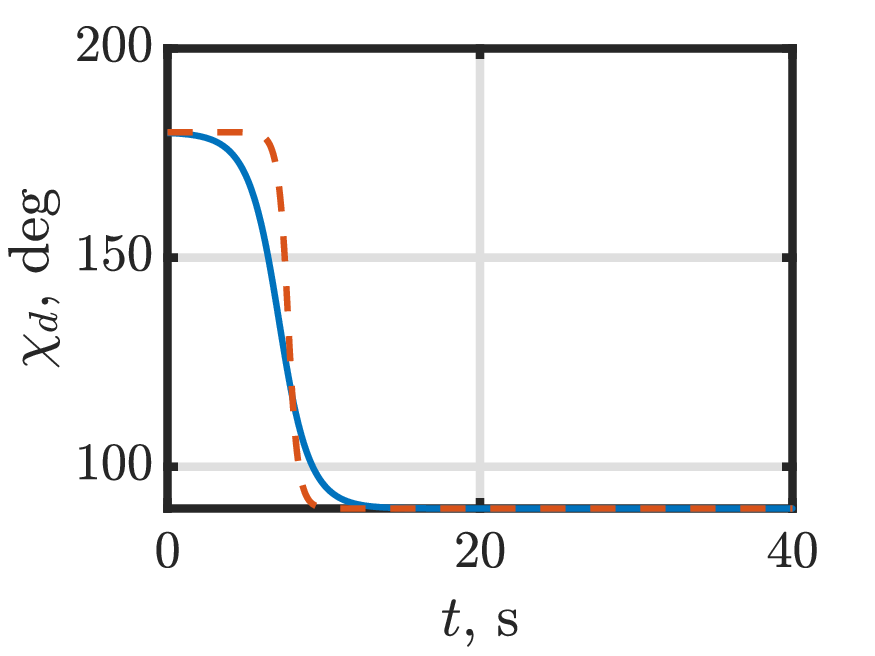}
			\caption{ Course angle variations}
			\label{fig: Commanded course angle variations}     
		\end{subfigure}%
		\begin{subfigure}[b]{.25\textwidth}
			\centering		
			\includegraphics[trim={2pt 3pt 2pt 3pt}, clip,width=\linewidth]{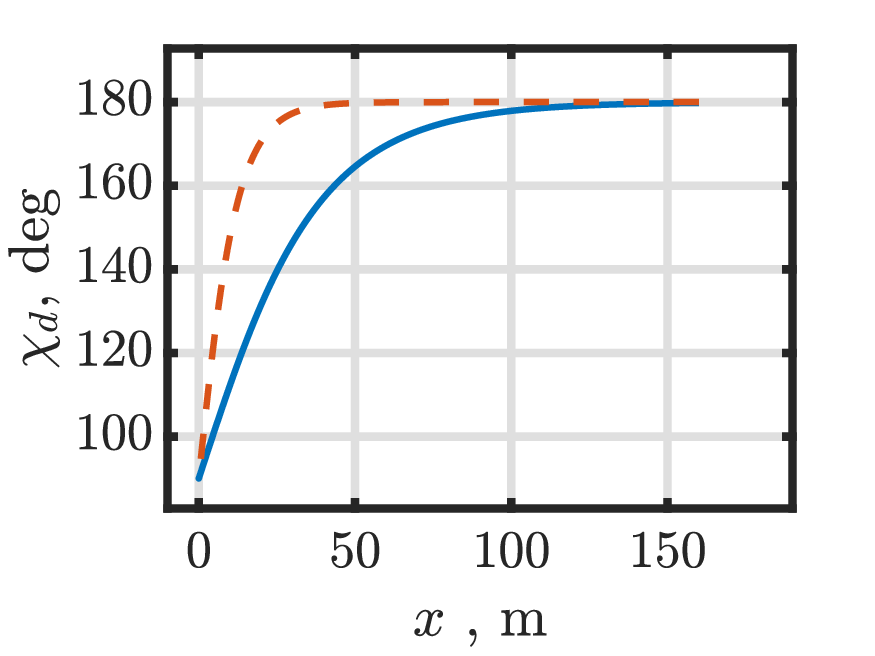}
			\caption{ Course angle with lateral error }
			\label{fig: Commanded course angle variations with lateral error}
		\end{subfigure}%
		\qquad
		\begin{subfigure}[b]{.25\textwidth}
			\centering
			\includegraphics[trim={2pt 3pt 2pt 3pt}, clip, width=\linewidth]{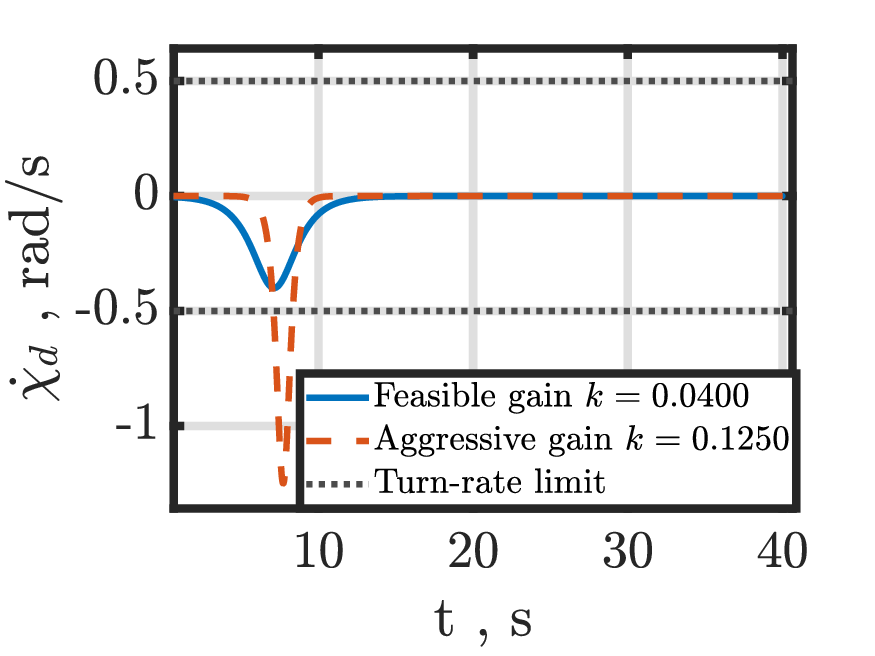}
			\caption{ Turn-rate variations}
			\label{fig: Turn rate variations}     
		\end{subfigure}%
		\begin{subfigure}[b]{.25\textwidth}
			\centering		\includegraphics[trim={2pt 3pt 2pt 3pt}, clip,width=\linewidth]{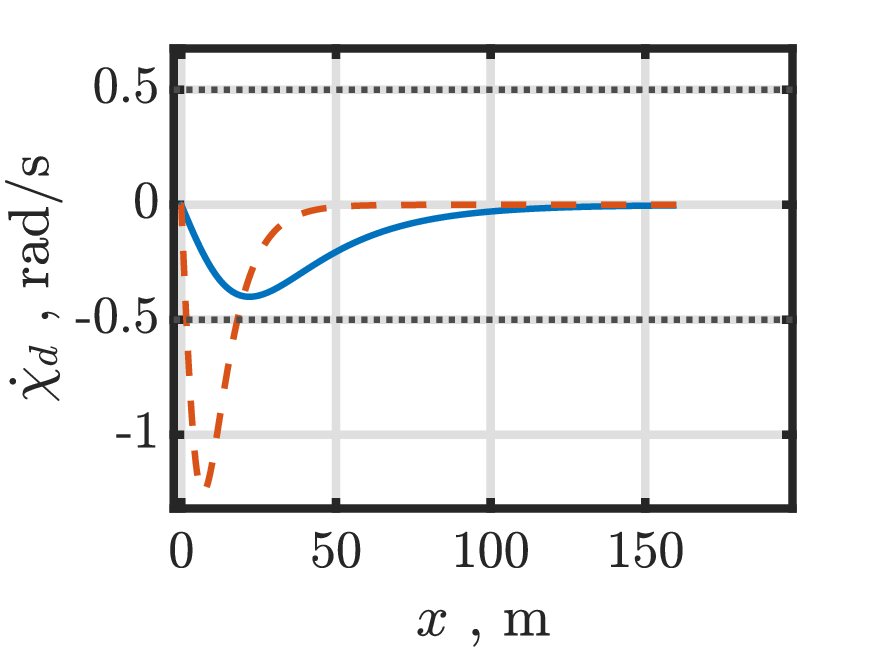}
			\caption{ Turn-rate  with lateral error}
			\label{fig: Turn-rate variations with lateral error}
		\end{subfigure}%
		\caption{ Results for following straight line path}
		\label{fig: Results for following straight line path}
	\end{figure}
	\subsection{Scenarios with non-ideal initial heading}
	Three initialisation scenarios are considered: ideal heading alignment, a positive heading offset of $90^\circ$, and a negative heading offset of $60^\circ$.
	Fig.~\ref{fig: Results for following straight line path nonideal} investigates the effect of non-ideal initial course-angle conditions using the same simulation parameters as Case~1. As shown in Fig.~\ref{fig: UAV trajectories nonideal}, all trajectories successfully converge toward the desired path despite the large initial heading deviations. The lateral-error responses in Fig.~\ref{fig: Error profiles nonideal} exhibit similar convergence characteristics, indicating that the proposed vector field effectively recovers from significant initialization errors. The course-angle error profiles in Fig.~\ref{fig: Course angle error profiles nonideal} show a transient heading-correction phase followed by convergence to the desired vector-field direction. Furthermore, the turn-rate profiles in Fig.~\ref{fig: Turn-rate profiles} remain within the prescribed maneuverability envelope throughout the maneuver. Figs.~\ref{fig: Commanded course angle variations with lateral error nonideal} and \ref{fig: Turn-rate variations with lateral error nonideal} show the course-angle and turn-rate responses as functions of lateral error for different initial heading conditions. Although the trajectories begin with substantial course mismatches, they converge toward the same vector-field profile without violating the prescribed turn-rate bound, demonstrating robustness to nonideal initialization.
	
	\begin{figure}
		\begin{subfigure}[b]{.25\textwidth}
			\centering			\includegraphics[trim={2pt 3pt 2pt 3pt}, clip, width=\linewidth]{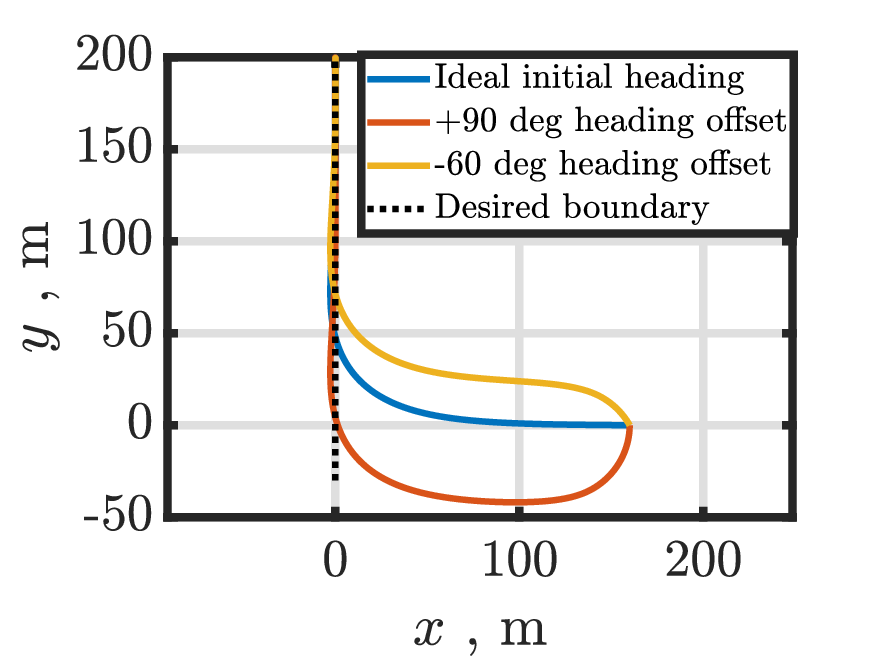 }
			\caption{ UAV trajectories}
			\label{fig: UAV trajectories nonideal}     
		\end{subfigure}%
		\begin{subfigure}[b]{.25\textwidth}
			\centering		\includegraphics[trim={2pt 3pt 2pt 3pt}, clip, width=\linewidth]{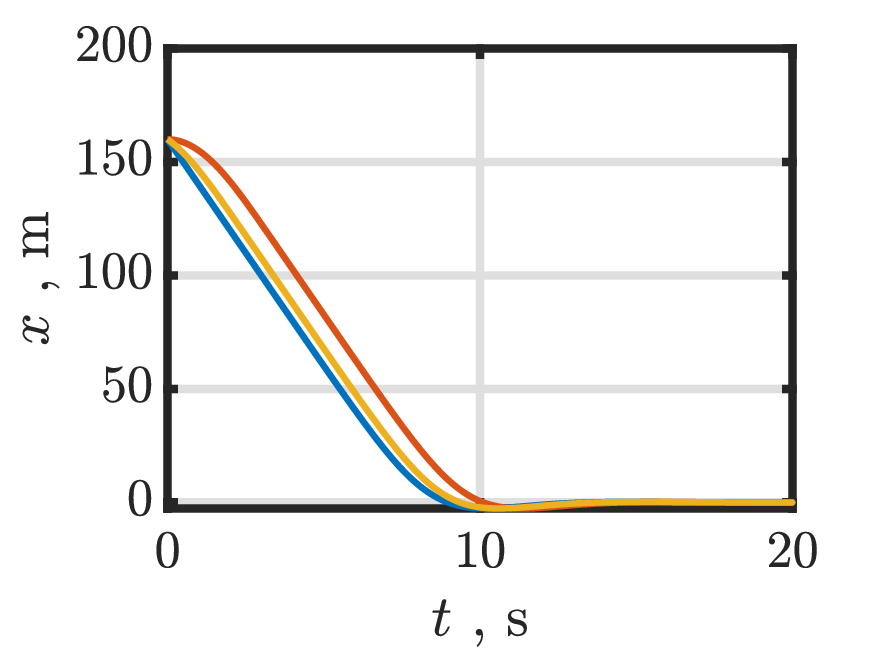}
			\caption{ Tracking error profiles}
			\label{fig: Error profiles nonideal}
		\end{subfigure}%
		\qquad
		\begin{subfigure}[b]{.25\textwidth}
			\centering			\includegraphics[trim={2pt 3pt 2pt 3pt}, clip, width=\linewidth]{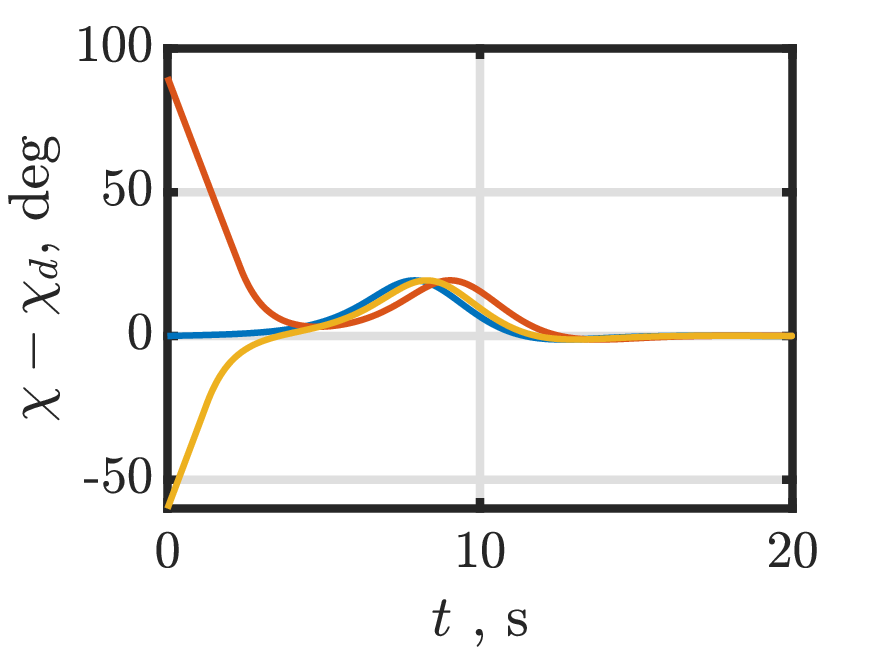}
			\caption{ Course angle error profiles}
			\label{fig: Course angle error profiles nonideal}     
		\end{subfigure}%
		\begin{subfigure}[b]{.25\textwidth}
			\centering		\includegraphics[trim={2pt 3pt 2pt 3pt}, clip,width=\linewidth]{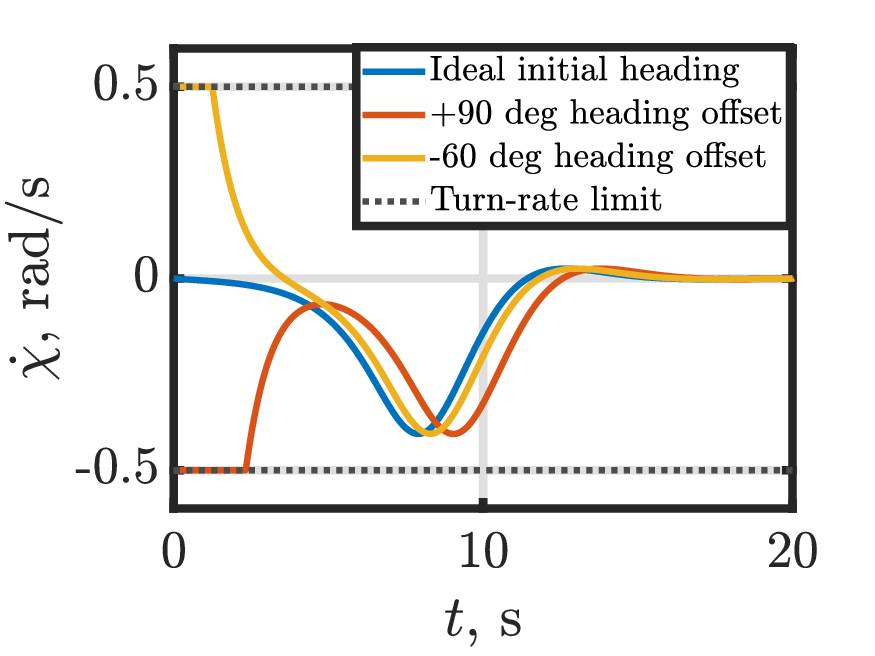}
			\caption{Turn-rate profiles}
			\label{fig: Turn-rate profiles}
		\end{subfigure}%
		\qquad 
		\begin{subfigure}[b]{.25\textwidth}
			\centering		
			\includegraphics[trim={2pt 3pt 2pt 3pt}, clip,width=\linewidth]{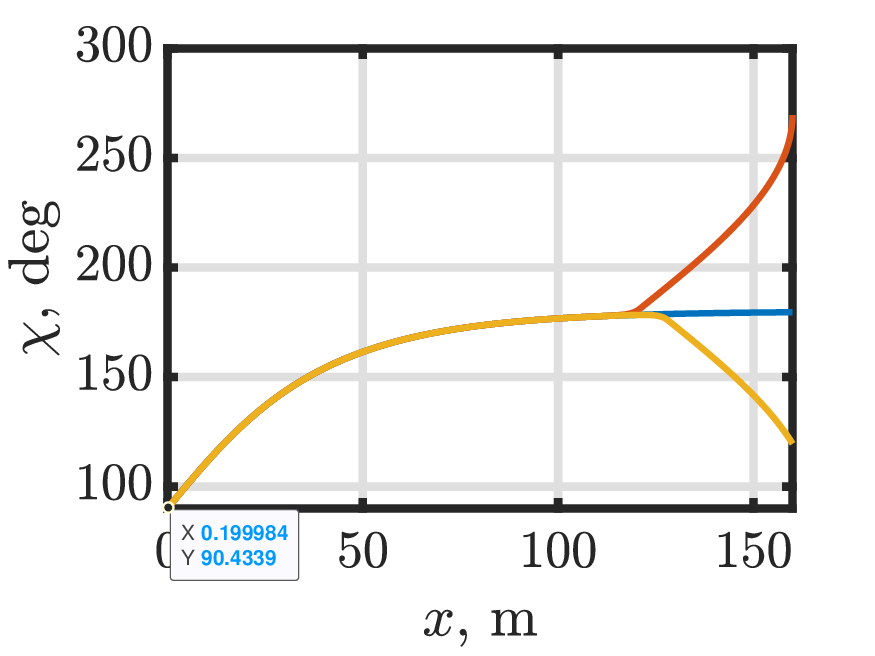}
			\caption{ Course angle with lateral error }
			\label{fig: Commanded course angle variations with lateral error nonideal}
		\end{subfigure}%
		\begin{subfigure}[b]{.25\textwidth}
			\centering		\includegraphics[trim={2pt 3pt 2pt 3pt}, clip,width=\linewidth]{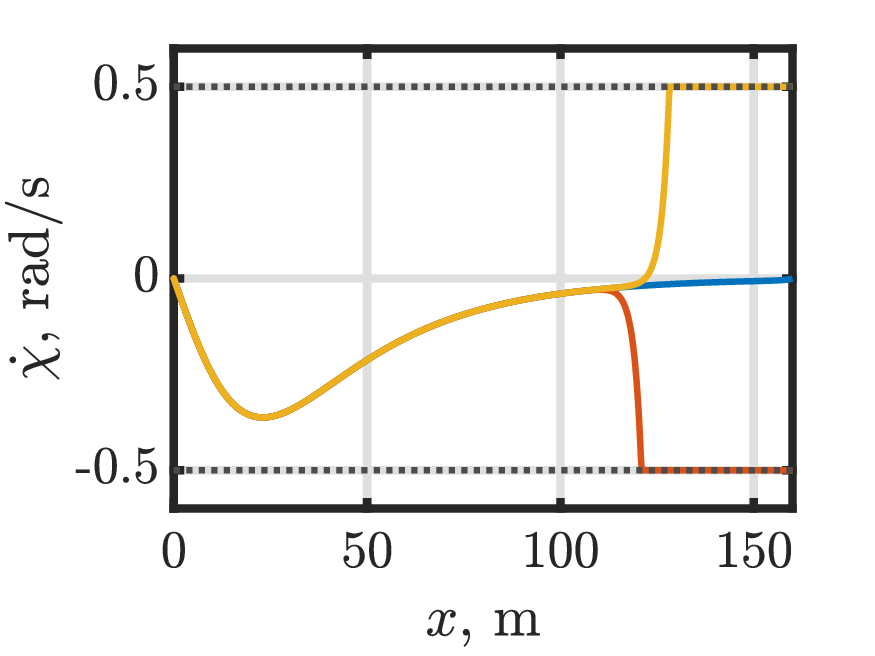}
			\caption{ Turn-rate  with lateral error}
			\label{fig: Turn-rate variations with lateral error nonideal}
		\end{subfigure}%
		\caption{Results for non-ideal initial course direction}
		\label{fig: Results for following straight line path nonideal}
	\end{figure}
	
	\subsection{Comparative study}
	This case investigates the influence of the maneuverability ratio
	$\rho=\omega_{\max}/V_g$ on the steering command required by different
	vector-field guidance laws. The simulations were performed with a
	constant ground speed of $V_g=20~\mathrm{m\,s^{-1}}$ and three representative
	fixed-wing maneuverability ratios, namely
	$\rho \in \{0.015,\,0.025,\,0.035\}$, corresponding to maximum allowable
	turn-rates of $\omega_{\max} \in \{0.3,\,0.5,\,0.7\}~\mathrm{rad\,s^{-1}}$.
	For each guidance law, the gain was selected according to the
	analytically derived feasibility condition is summarized in
	Table~\ref{tab: qualitative comparison}, ensuring that the commanded Turn-rate
	remained compatible with the prescribed maneuvering envelope.
	\begin{figure}
		\centering   \includegraphics[width=0.9\linewidth]{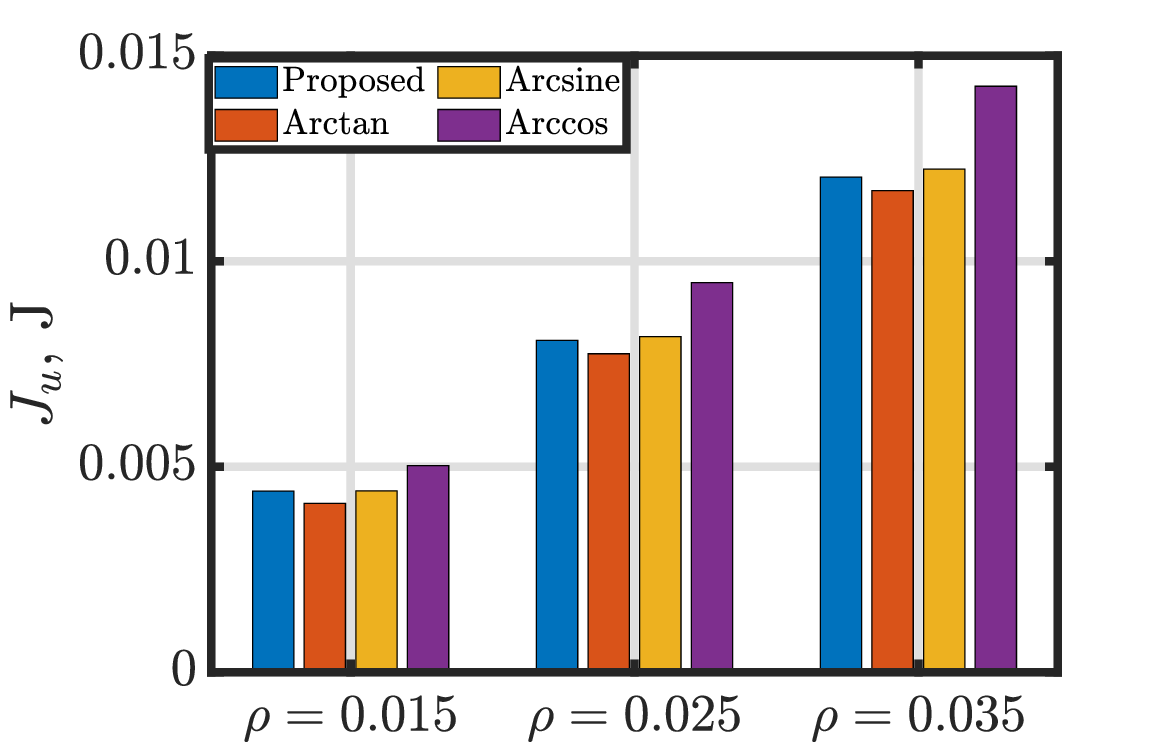}
		\caption{Comparative average control effort analysis}
		\label{fig: Comparative average control effort analysis}
	\end{figure}
	Fig.~\ref{fig: Comparative average control effort analysis} presents the average control effort
	$J_u=(1/T)\int_0^T \dot{\chi}^2\,dt$ for each of the vector field methods. As expected, the average control
	effort increases with $\rho$ for all guidance laws because larger
	turn-rate limits lead to more aggressive steering actions and faster
	boundary convergence. Among the methods, the arccos vector field
	consistently exhibits the highest control effort owing to its strong
	far-field attraction characteristics. In contrast, the
	arctan vector field produces the lowest steering effort. The
	proposed vector field lies between these two extremes, requiring only a
	moderate increase in steering effort while preserving a smooth sigmoid
	variation and a compact gain-selection rule. The arcsine vector field
	shows a similar trend but tends to demand slightly higher steering
	commands due to its stronger local attraction
	properties.
	Overall, the results indicate that the proposed vector field achieves a
	balanced compromise between convergence aggressiveness and control
	effort. Rather than minimizing control effort alone, the proposed
	formulation maintains turn-rate feasibility through an explicit
	analytical gain design while avoiding the excessive steering demand
	associated with highly aggressive vector-field structures.
	The proposed vector field exhibits a smoother convergence profile compared with the more aggressive arcsine and arccosine fields. Consequently, its settling time can be slightly larger under the same maneuverability ratio. This behavior is consistent with the normalized sigmoid shaping structure, which avoids abrupt heading variations and distributes the steering demand more gradually. Therefore, the proposed formulation is better interpreted as a smooth turn-rate-feasible vector field guidance.
	\vspace{-0.5cm}
	\section{Conclusion}
	
	This paper proposes a direct vector-field guidance law for turn-rate-limited path-following. The guidance strategy uses a tanh-hyperbolic sigmoid course-angle profile, resulting in a smooth vector field that continuously drives the vehicle toward the desired path.
	A closed-form expression for the maximum turn-rate was derived, enabling the formulation of an explicit gain-selection condition that guarantees compliance with prescribed turn-rate constraints.
	The stability and convergence properties of the proposed guidance law were analyzed through the lateral-error dynamics, and numerical simulations were conducted under both ideal and non-ideal initial heading conditions.
	Simulation results illustrate path-following convergence while respecting the prescribed turn-rate envelope. A comparative study involving arctan-, arcsine-, and arccos-based vector-field guidance laws further highlighted the influence of vector-field shaping on steering demand and convergence behavior. While the arctan and arccos formulations represent smoother and more aggressive extremes, respectively, the proposed guidance law provides a balanced compromise between steering smoothness, convergence characteristics, and analytical tractability. In particular, the derived gain-selection rule offers a compact and intuitive mechanism for enforcing turn-rate feasibility across different maneuverability regimes.  
	Future work will focus on extending the proposed framework to curved boundaries, closed-orbit surveillance missions, wind-perturbed operations, and high-fidelity six-degree-of-freedom UAV models. The analytical gain-design methodology also provides a foundation for future developments involving curvature-aware and geometry-constrained vector-field guidance strategies.
	
	\section*{Declaration of Usage of Generative ai}
	{
		The authors used generative AI tools to improve the readability of the final draft. In the end, the authors reviewed and edited the content as needed and take full responsibility for the content of the publication.
	}
	
	
	\bibliographystyle{ieeetr}
	\bibliography{ref_CONTROLO}
	
\end{document}